\title{Homophilic networks evolving by mimesis}
\author{
        Jos\'e Manuel Rodr\'iguez Caballero \\
                Institute of Computer Science \\ University of Tartu \\ Tartu, Estonia
}
\date{\today}
\newtheorem{thm}{Theorem}
\newtheorem{lem}{Lemma}
\theoremstyle{definition}
\newtheorem{defn}{Definition}
\begin{document}
\maketitle

\begin{abstract}
We provide a mathematical model for networks based on similarities (homophily) and evolving by mutual imitation (mimesis). We show that such social networks will converge to a state of segregation, where the in-group interactions will be maximal and there will be no out-group flow of information. We establish some connections between our model and the Wolfram model for fundamental physics.
\end{abstract}

\section{Introduction}

\epigraph{There is in [people's] natural
propensity, from childhood
onward, to engage in mimetic
activities.

And this distinguishes [people] from
other creatures, that [they are]
thoroughly mimetic and through
mimesis [take they] first steps in
understanding}{Aristotle (The
Poetics)}

The mimetic behavior in human activity is a well-established scientific fact \cite{garrels2011mimesis}. The origins of mimetic theory can be found in Plato's \emph{Republic} and Aristotle's \emph{Poetics} \cite{lawtoo2018violence}. R. Girald \cite{girard2014mensonge, girard1998mimesis} developed the theory of \emph{d\'esir mim\'etique}, in which \emph{mimesis} of the wishes is the main driving force of human social behavior. His main postulate is\footnote{A literary translation is: ``Man always desires according to the desire of the Other". Of course, ``Man" here stands for human, including any gender.}

\begin{quotation}
L'homme d\'esire toujours selon le d\'esir de l'Autre.
\end{quotation}

The theory of \emph{d\'esir mim\'etique}, which is mainly philosophical and literary \cite{deguy1982rene, lawtoo2018violence}, may be also related to well-studied neurological structures known as \emph{mirror neurons} \cite{ferrari2014mirror, ferrari2009monkey, heyes2010mirror, iacoboni2009imitation, williams2001imitation}. A mirror neuron in a human is a neuron that is activated by observing the behavior of other humans. Nevertheless, the connection between mirror neurons and imitation is still a subject of debate among scientists \cite{hickok2009eight}.

Mimetic human behavior is present in the theory of social laser \cite{khrennikov2016social}, where people willing to imitate each other are described in a similar way to the bosons from particle physics. In this approach, mimesis is modeled by a social version of Bose-Einstein statistics.

The tendency of humans to interact more frequently with other humans sharing similarities is known as \emph{homophily} \cite{mcpherson2001birds, bramoulle2012homophily, kim2017effect}. This tendency is also present in other animals and it is likely to play some evolutionary role \cite{fu2012evolution}. The homophilic network \cite[page 87]{meir2018strategic} is a representation of social connections among people are determined by proximity of interest, status, wealth, ethnicity, position in a hierarchy, etc. Several mathematical techniques, mostly from graph theory, have been developed in order to study this structure \cite[section 7.13]{newman2010networks}.

The aim of the present paper is to introduce a mathematical model for homophilic networks (Definition \ref{homophilicnetwork}) evolving by mimesis (equation \eqref{integralequation}). Our model is a generalization of the model for diffusion in graphs \cite[section 6.13.1]{newman2010networks}. Diffusion equations (equation \eqref{differentialequation}) have already been used in order to model social phenomena\cite{yurevich2018modeling}. Nevertheless, we have not found our equation \eqref{integralequation} in the existing literature, and the continuous update of the time-dependent network $t\mapsto G_{\varepsilon}\left( \boldsymbol{\psi}(t) \right)$ seems to be a new feature in the subject of diffusion models.

Making an analogy with fundamental physics via the Wolfram model \cite{wolfram2020class, gorard2020Relativistic, gorard2020Quantum}, we will show (Theorem \ref{limitbehavior}) that the evolution of a network, in our model, will converge to a state that is the social analogous of what cosmologist call the Black Hole Era, i.e., society will be fragmented into groups, with maximal in-group interaction and no out-group flow of information.

\section{Mathematical model}

Our standpoint is the following discrete structure, modeling homophily.

\begin{defn}\label{homophilicnetwork}
Fix a positive integer $n$ (size of the population) and an extended nonnegative real number\footnote{A number that is either a nonnegative real number or the positive infinite.} $\varepsilon \in [0,+\infty]$ (tolerance threshold). Consider a function that assigns an undirected simple graph $G_{\varepsilon}(\boldsymbol{\psi})$ to any $\boldsymbol{\psi} = \left(\psi_1, ..., \psi_n\right) \in \mathbb{R}^n$. The vertices of $G_{\varepsilon}(\boldsymbol{\psi})$ are the numbers $1, 2, ..., n$. In $G_{\varepsilon}(\boldsymbol{\psi})$, there is an edge between $i$ and $j$ if and only if $i \neq j$ and $|\psi_i - \psi_j| \leq \varepsilon$. We will call $G_{\varepsilon}(\boldsymbol{\psi})$ the \emph{homophilic network} of the population $\boldsymbol{\psi}$ at tolerance level $\varepsilon$.
\end{defn}

 In the next definition we provide a mathematical model for the evolution by mimesis of a homophilic network. We will use the notation $\boldsymbol{L}\left[ G\right]$ for the Laplacian matrix \cite[section 6.13.1]{newman2010networks} of the graph $G$.

\begin{defn}
Fix a positive integer $n$, a positive real number $C$ (diffusion constant), a real number $t_0$ (initial time) and a vector $\boldsymbol{\psi}_0\in\mathbb{R}^n$ (initial microstate). The \emph{homophilic-mimetic model} for the evolution of a network is given by the matrix integral equation\footnote{Notice that $\boldsymbol{\psi} \mapsto \boldsymbol{L}\left[ G_{\varepsilon}\left(\boldsymbol{\psi}\right) \right]$ is piecewise constant. So, the integral is well-defined.}
\begin{equation}\label{integralequation}
\boldsymbol{\psi}(t) + C \int_{t_0}^t \boldsymbol{L}\left[ G_{\varepsilon}\left(\boldsymbol{\psi}(\tau) \right) \right]\boldsymbol{\psi}(\tau) \, d\tau = \boldsymbol{\psi}_0,
\end{equation}
defined for $t \in (t_0, +\infty)$, where $\boldsymbol{\psi}: [t_0,+\infty] \longrightarrow \mathbb{R}^n$ is a continuous function\footnote{The existence of $\boldsymbol{\psi}(+\infty)$ is a consequence of Theorem \ref{limitbehavior} and the convergence of a diffusion process described by \eqref{differentialequation}.}.  We call $\boldsymbol{\psi}(t)$ the \emph{microstate of the network} at time $t$. The graph $G_{\varepsilon}\left(\boldsymbol{\psi}(t) \right)$ will be called the \emph{macrostate of the network} at time $t$. We will call \eqref{integralequation} the \emph{homophilic-mimetic equation}.
\end{defn}

The deduction of the equation \eqref{integralequation} as a description of a sociological phenomenon is given by the assumption that the mimetic human behavior in a group is a sort of diffusion process. We take the simplest model of diffusion, namely the differential equation \eqref{explicitsolution}, deduced from \emph{Newton's law of cooling}. Furthermore, we let the network of human interactions be updated every instant in order to guarantee the exchange of information among people sharing similarities and destroying connections among people who are different enough. In order to avoid discontinuities, we express the resulting set of differential equations (defined of different intervals where the network is constant), as an integral. We do not pretend to give an accurate description of social reality with this equation, but just a toy model of social evolution under the assumptions of homophily and mimesis. 

The macrostate of society $G_{\varepsilon}\left(\boldsymbol{\psi}(t) \right)$ can be easily measured using metadata from internet, e.g., online social networks, or even polls. On the other hand, the measurement of the microstate of society  $\boldsymbol{\psi}(t)$, and its unit of measurement, is an extremely complicated problem and it will not be discussed in the present paper. It is natural to guess that the microstate may be related to wealth, ethnicity, sexual attractiveness, social status, academic and military hierarchy, etc. Nevertheless, the choice of a random value producing an observed macrostate could to be the best strategy for simulations of real-life situations.

\section{Limit behavior}

Following S. Wolfram \cite{wolfram2020class} and J. Gorard \cite{gorard2020Relativistic, gorard2020Quantum}, we will interpret the evolution of the time-dependent network $t \mapsto G_{\varepsilon}\left(\boldsymbol{\psi}(t) \right)$ as the evolution of the spatial graph\footnote{The Wolfram model is really about \emph{hypergraphs}, but we will focus on the particular case of graphs.}, i.e., the evolution of space in the Wolfram model. In this framework it is natural to import concepts from fundamental physics, e.g., event horizon and black hole, to graph theory.

\begin{defn}
Consider a time-dependent network\footnote{In our model, $\mathcal{N}\left(t \right) = G_{\varepsilon}\left(\boldsymbol{\psi}(t) \right)$.} $t \mapsto \mathcal{N}\left(t \right)$, having the same set of vertices $V$ for all values of $t$. Let $W \subseteq V$ be a subset of vertices of the network . We say that $W$ will produce an \emph{event horizon} of $t \mapsto \mathcal{N}\left(t \right)$ if for all $t$ large enough, given two vertices, $w \in W$ and $z \in V \backslash W$ (complement of $W$), there is no path between $w$ and $z$ in $\mathcal{N}\left(t \right)$. Furthermore, if the induced subgraph\footnote{Let's recall that the \emph{induced subgraph} of a set of vertices $W$ is the graph obtained by restricting the graph to the vertices of $W$ and the edges to the edges having vertices of $W$ at both extremes.} of $W$ in $\mathcal{N}\left(t \right)$ is a complete graph\footnote{This implies that $W$ will evolve to an event horizon.} for all $t$ large enough, we say that $W$ will end as a \emph{black hole}. If there is a partition of $V$ such that each part will end as a black hole, then we say that the network $t \mapsto \mathcal{N}\left(t \right)$ will end as a \emph{Black Hole Era}\footnote{This name came from cosmology \cite{adams2016five}.}.
\end{defn}

The motivation behind this graph-theoretical definition of event horizon is that in physics it is a boundary in spacetime preventing any exchange of information between both sides of the boundary. In our model, the \emph{flow of information} is interpreted as the paths in the graph.

The analogy between black holes and complete graphs is given by the fact that complete graphs are the densest graphs\footnote{Let's recall that the \emph{density} of a graph is equal to the number of edges divided by the maximum possible number of edges.} whereas black holes are the densest forms of organized matter \cite{susskind2008black}.

The following theorem states that, in our model, networks will always end as a Black Hole Era.

\begin{thm}\label{limitbehavior}
Fix a positive integer $n$, an extended nonnegative real number $\varepsilon$, a positive real number $C$, a real number $t_0$ and a vector  $\boldsymbol{\psi}_0 \in \mathbb{R}^n$. The time-dependent network $t \mapsto G_{\varepsilon}\left(\boldsymbol{\psi}(t) \right)$ determined by \eqref{integralequation} will end as a Black Hole Era, i.e., there is a graph $\Omega$, which is a union of complete graphs and satisfies $G_{\varepsilon}\left( \boldsymbol{\psi}(t) \right) = \Omega$ for all $t$ large enough.
\end{thm}

As a preliminary, we need the following lemmas.

\begin{lem}\label{normnonincreasing}
Fix a positive integer $n$ and a graph $G$ having $n$ vertices. For any pair of real numbers $t_0 < t_1$, any positive real number $C$, any vector $\boldsymbol{\psi}_0 \in \mathbb{R}^n$ and any continuous function, differentiable on $(t_0, t_1)$, $\boldsymbol{\psi}: [t_0, t_1] \longrightarrow \mathbb{R}^n$ satisfying the matrix differential equation
\begin{equation}\label{differentialequation}
\frac{d \boldsymbol{\psi}}{dt} + C \boldsymbol{L}[G]\boldsymbol{\psi} = 0,
\end{equation}
on the interval $(t_0, t_1)$ and the initial condition $\boldsymbol{\psi}(t_0) = \boldsymbol{\psi}_0$, the inequality 
\begin{equation}\label{limitzero}
| \boldsymbol{\psi}(t) | \leq | \boldsymbol{\psi}(t_0) |
\end{equation}
holds for $t_0 \leq t \leq t_1$.
\end{lem}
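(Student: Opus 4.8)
The plan is to show that the Euclidean norm $|\boldsymbol{\psi}(t)|^2 = \boldsymbol{\psi}(t)^{\top}\boldsymbol{\psi}(t)$ is nonincreasing in $t$, which immediately yields \eqref{limitzero}. The key structural fact I would invoke is that the Laplacian matrix $\boldsymbol{L}[G]$ of an undirected graph is symmetric and positive semidefinite: it admits the factorization $\boldsymbol{L}[G] = \boldsymbol{B}\boldsymbol{B}^{\top}$ via an incidence matrix, or equivalently satisfies the quadratic-form identity $\boldsymbol{x}^{\top}\boldsymbol{L}[G]\boldsymbol{x} = \sum_{(i,j)\in E}(x_i - x_j)^2 \geq 0$ for every $\boldsymbol{x}\in\mathbb{R}^n$, where the sum runs over the edges of $G$. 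This nonnegativity is the engine of the whole argument.

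First I would define the energy-type function $f(t) = |\boldsymbol{\psi}(t)|^2 = \boldsymbol{\psi}(t)^{\top}\boldsymbol{\psi}(t)$, which is continuous on $[t_0,t_1]$ and differentiable on $(t_0,t_1)$ since $\boldsymbol{\psi}$ is. Then I would differentiate, using the product rule, to get $f'(t) = 2\,\boldsymbol{\psi}(t)^{\top}\frac{d\boldsymbol{\psi}}{dt}(t)$. Substituting the differential equation \eqref{differentialequation} in the form $\frac{d\boldsymbol{\psi}}{dt} = -C\,\boldsymbol{L}[G]\boldsymbol{\psi}$ gives $f'(t) = -2C\,\boldsymbol{\psi}(t)^{\top}\boldsymbol{L}[G]\boldsymbol{\psi}(t)$. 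By the positive-semidefiniteness of $\boldsymbol{L}[G]$ and the hypothesis $C > 0$, the right-hand side is $\leq 0$, so $f'(t) \leq 0$ on $(t_0,t_1)$.

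Finally, since $f$ is continuous on $[t_0,t_1]$ and has nonpositive derivative on the open interval, the mean value theorem (or the fundamental theorem of calculus) shows $f$ is nonincreasing, hence $f(t) \leq f(t_0)$ for all $t_0 \leq t \leq t_1$. Taking square roots gives $|\boldsymbol{\psi}(t)| \leq |\boldsymbol{\psi}(t_0)|$, which is exactly \eqref{limitzero}.

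I do not anticipate a serious obstacle here: the computation is short once positive-semidefiniteness is in hand. The only point that deserves care is justifying the quadratic-form inequality $\boldsymbol{x}^{\top}\boldsymbol{L}[G]\boldsymbol{x}\geq 0$ cleanly, and noting that the argument uses only that $G$ is a fixed graph on the interval $[t_0,t_1]$, so that $\boldsymbol{L}[G]$ is a constant matrix and the ODE \eqref{differentialequation} has the stated constant-coefficient form. This is precisely why the lemma is phrased for a single fixed graph $G$: the later theorem will stitch together such intervals across the finitely many macrostates the network can occupy.
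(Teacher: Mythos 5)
Your proof is correct, but it takes a genuinely different route from the paper's. You run a Lyapunov-type energy argument: differentiate $f(t)=\boldsymbol{\psi}(t)^{\top}\boldsymbol{\psi}(t)$, substitute the ODE, and use the positive semidefiniteness of the Laplacian via the quadratic-form identity $\boldsymbol{x}^{\top}\boldsymbol{L}[G]\boldsymbol{x}=\sum_{(i,j)\in E}(x_i-x_j)^2\geq 0$ to conclude $f'\leq 0$. The paper instead diagonalizes: it takes an orthonormal eigenbasis $\boldsymbol{v}_1,\dots,\boldsymbol{v}_n$ of $\boldsymbol{L}[G]$ with eigenvalues $0=\lambda_1\leq\cdots\leq\lambda_n$, writes the explicit solution $\boldsymbol{\psi}(t)=\sum_i\langle\boldsymbol{\psi}_0,\boldsymbol{v}_i\rangle e^{-C\lambda_i(t-t_0)}\boldsymbol{v}_i$, and bounds $|\boldsymbol{\psi}(t)|^2$ term by term using the Pythagorean theorem and $\lambda_i\geq 0$. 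Both arguments ultimately rest on the same structural fact (the spectrum of the Laplacian is nonnegative, equivalently the quadratic form is nonnegative), but they package it differently. Your version is more elementary and self-contained for this lemma, and it has the minor advantage of not needing to assert that the displayed exponential formula is \emph{the} solution (i.e., no appeal to uniqueness for the linear ODE): it applies to any function satisfying the equation. The paper's spectral route buys something else, though: the explicit eigen-expansion it sets up here is reused directly in Lemma \ref{diameterbound}, where the decomposition into the kernel component $\boldsymbol{\psi}_{\infty}$ and the exponentially decaying remainder, together with the explicit form of the kernel eigenvectors supported on connected components, is essential. So your approach proves the present lemma cleanly but would still require introducing the spectral machinery for the next one.
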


\begin{proof}
Let $\boldsymbol{v}_1, \boldsymbol{v}_2,...,\boldsymbol{v}_n$ be an orthonormal system of eigenvectors of $\boldsymbol{L}[G]$, associated to the eigenvalues $0 = \lambda_1 \leq \lambda_2 \leq ... \leq \lambda_n$, respectively. The explicit solution of \eqref{differentialequation} satisfying the initial condition and continuous on $[t_0, t_1]$, is
\begin{equation}\label{explicitsolution}
\boldsymbol{\psi}(t) = \sum_{i=1}^n \langle \boldsymbol{\psi}_0 , \boldsymbol{v}_i\rangle e^{-C \lambda_i (t-t_0)} \boldsymbol{v}_i,
\end{equation}
for $t_0 \leq t \leq t_1$.

Applying the Pythagorean theorem,
\begin{equation}\label{explicitsolution}
|\boldsymbol{\psi}(t)|^2 = \sum_{i=1}^n |\langle\boldsymbol{\psi}_0 , \boldsymbol{v}_i\rangle|^2 e^{-2C \lambda_i (t-t_0)}.
\end{equation}

Also, using the fact that $t \mapsto |\langle \boldsymbol{\psi}_0 , \boldsymbol{v}_i\rangle|^2 e^{-2C \lambda_i (t-t_0)}$ is monotonically decreasing,
$$
\sum_{i=1}^n |\langle \boldsymbol{\psi}_0 , \boldsymbol{v}_i\rangle|^2 e^{-2C \lambda_i (t-t_0)} \leq \sum_{i=1}^n |\langle \boldsymbol{\psi}_0 , \boldsymbol{v}_i\rangle|^2.
$$

Notice that $\sum_{i=1}^n |\langle \boldsymbol{\psi}_0 , \boldsymbol{v}_i\rangle|^2 = |\boldsymbol{\psi}_0|^2$. Finally, by transitivity we get $|\boldsymbol{\psi}(t)|^2 \leq |\boldsymbol{\psi}_0|^2$. Therefore, $|\boldsymbol{\psi}(t)| \leq |\boldsymbol{\psi}_0|$.
\end{proof}

\begin{lem}\label{diameterbound}
Fix a positive integer $n$ and a graph $G$ having $n$ vertices. For any pair of real numbers $t_0 < t_1$, any positive real number $C$, any vector $\boldsymbol{\psi}_0 \in \mathbb{R}^n$, any connected component $H$ of $G$ and any continuous function, differentiable on $(t_0, t_1)$, $\boldsymbol{\psi}: [t_0, t_1] \longrightarrow \mathbb{R}^n$ satisfying the matrix differential equation \eqref{differentialequation} on the interval $(t_0, t_1)$ and the initial condition $\boldsymbol{\psi}(t_0) = \boldsymbol{\psi}_0$, the inequality 
\begin{equation}\label{limitzero}
\max_{i\in H}\psi_i(t) - \min_{j\in H}\psi_j(t) \leq 2(n - k_G) | \boldsymbol{\psi}_0 | e^{-C \lambda_G (t-t_0)},
\end{equation}
holds for all $t_0 \leq t \leq t_1$, where $\boldsymbol{\psi} = \left(\psi_1,...,\psi_n \right)$, $k_G$ is the number of connected components of $G$ and $\lambda_G$ is the minimum nonzero eigenvalue of $\boldsymbol{L}[G]$. We used the notation $i\in H$ to express that $i$ is a vertex in $H$.
\end{lem}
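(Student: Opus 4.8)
The plan is to build directly on the explicit solution obtained in the proof of Lemma \ref{normnonincreasing}. Let $\boldsymbol{v}_1, \dots, \boldsymbol{v}_n$ be the orthonormal eigenbasis of $\boldsymbol{L}[G]$ with eigenvalues $0 = \lambda_1 \le \dots \le \lambda_n$. By uniqueness of solutions of the linear system \eqref{differentialequation}, the given function $\boldsymbol{\psi}$ coincides coordinatewise with that explicit representation, so
\begin{equation*}
\psi_m(t) = \sum_{\ell=1}^n \langle \boldsymbol{\psi}_0, \boldsymbol{v}_\ell \rangle\, e^{-C\lambda_\ell(t-t_0)}\, (\boldsymbol{v}_\ell)_m .
\end{equation*}
For two vertices $i,j$ lying in the same connected component $H$ I would subtract the expressions for $\psi_i(t)$ and $\psi_j(t)$ and then analyze the resulting sum term by term.

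The key structural observation---the real content of the lemma---is that the eigenvectors associated to the eigenvalue $0$ are constant on each connected component. This is the standard fact that the kernel of a graph Laplacian is spanned by the indicator vectors of its connected components, so that the multiplicity of $0$ is exactly $k_G$ and hence $\lambda_{k_G+1} = \lambda_G$. Consequently, for every $\ell = 1, \dots, k_G$ one has $(\boldsymbol{v}_\ell)_i = (\boldsymbol{v}_\ell)_j$ whenever $i,j \in H$, so these terms vanish from the difference $\psi_i(t) - \psi_j(t)$. I would therefore be left only with the sum over the $n - k_G$ indices $\ell = k_G+1, \dots, n$ carrying strictly positive eigenvalues.

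The remainder is routine estimation. For each surviving term I would bound the coefficient by Cauchy--Schwarz, $|\langle \boldsymbol{\psi}_0, \boldsymbol{v}_\ell \rangle| \le |\boldsymbol{\psi}_0|$; bound the eigenvector contribution using that $\boldsymbol{v}_\ell$ is a unit vector, giving $|(\boldsymbol{v}_\ell)_i - (\boldsymbol{v}_\ell)_j| \le |(\boldsymbol{v}_\ell)_i| + |(\boldsymbol{v}_\ell)_j| \le 2$; and dominate the exponential by $e^{-C\lambda_\ell(t-t_0)} \le e^{-C\lambda_G(t-t_0)}$, valid since $\lambda_\ell \ge \lambda_G$ for $\ell \ge k_G+1$. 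Summing the $n-k_G$ terms yields $|\psi_i(t) - \psi_j(t)| \le 2(n-k_G)\,|\boldsymbol{\psi}_0|\, e^{-C\lambda_G(t-t_0)}$, and taking the maximum over $i,j \in H$ produces the claimed inequality.

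The main obstacle is the structural observation about the kernel of $\boldsymbol{L}[G]$; once the constant modes are shown to cancel, the difference of coordinates within a component decays exponentially rather than merely staying bounded, as in Lemma \ref{normnonincreasing}, and this exponential decay is precisely what drives the convergence used in Theorem \ref{limitbehavior}. I would note in passing that the factor $2(n-k_G)$ is not optimal---routing the estimate through the Euclidean norm of the transient part gives the sharper constant $2$---but the stated bound is entirely sufficient for the later application.
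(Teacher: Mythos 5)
Your proposal is correct and follows essentially the same route as the paper: spectral decomposition of $\boldsymbol{L}[G]$, the observation that the kernel (of dimension $k_G$, spanned by component indicators) contributes terms constant on each component, and a term-by-term Cauchy--Schwarz bound on the remaining $n-k_G$ modes yielding the factor $2(n-k_G)$ and the decay rate $\lambda_G$. The only difference is organizational --- you cancel the kernel modes directly in the coordinate difference $\psi_i(t)-\psi_j(t)$, whereas the paper subtracts the projection $\boldsymbol{\psi}_{\infty}$ and bounds $|\boldsymbol{\psi}(t)-\boldsymbol{\psi}_{\infty}|$ first --- and your closing remark that the Pythagorean route sharpens the constant to $2$ is accurate.
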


\begin{proof}
Let $\boldsymbol{v}_1, \boldsymbol{v}_2,...,\boldsymbol{v}_n$ and $\lambda_1, \lambda_2, ..., \lambda_n$ be as in the proof of Lemma \ref{normnonincreasing}. Using the fact that $k_G$ is the dimension of the kernel of $\boldsymbol{L}[G]$, defining $\boldsymbol{\psi}_{\infty} = \sum_{i=1}^{k_G} \langle \boldsymbol{\psi}_0 , \boldsymbol{v}_i\rangle \boldsymbol{v}_i$, we have
\begin{equation}
\boldsymbol{\psi}(t) - \boldsymbol{\psi}_{\infty} = \sum_{i=k_G+1}^n \langle \boldsymbol{\psi}_0 , \boldsymbol{v}_i\rangle e^{-C \lambda_i (t-t_0)} \boldsymbol{v}_i,
\end{equation}
for $t_0 \leq t \leq t_1$.

Applying the triangle inequality, the Cauchy-Schwarz inequality and the fact that $\boldsymbol{v}_1, \boldsymbol{v}_2,...,\boldsymbol{v}_n$  are normalized, we obtain the inequality
\begin{equation}
| \boldsymbol{\psi}(t) - \boldsymbol{\psi}_{\infty} | \leq (n-k_G) | \boldsymbol{\psi}_0 | e^{-C \lambda_G (t-t_0)}
\end{equation}
for all $t_0 \leq t \leq t_1$. Let $H_1, H_2,...,H_{k_G}$ be the connected components of $G$. Up to rearrangement of the indices, for $1 \leq r \leq k$ we have\footnote{$\# H_r$ is the number of vertices in the connected component $H_r$.} $\boldsymbol{v}_r= \left(\# H_r\right)^{-1/2}\left( v_{r, 1}, v_{r, 2},...,v_{r, n} \right)$, with $v_{r, i} = 1$ if $i$ is a vertex of $H_r$ and $v_{r, i} = 0$ otherwise. Hence, for any connected component $H_r$, the inequality
\begin{equation}
| \boldsymbol{\psi}_i(t) - \langle \boldsymbol{\psi}_0 , \boldsymbol{v}_r\rangle | \leq (n-k_G) | \boldsymbol{\psi}_0 | e^{-C \lambda_G (t-t_0)}
\end{equation}
holds for all $i$ which are vertices of $H_r$. Finally, applying triangle inequality
\begin{eqnarray*}
& & \max_{i\in H_r}\psi_i(t) - \min_{j\in H_r}\psi_j(t) \\
&=& \left|\max_{i\in H_r}\psi_i(t) - \min_{j\in H_r}\psi_j(t) \right| \\
&\leq& \left|\max_{i\in H_r}\psi_i(t) - \langle \boldsymbol{\psi}_0 , \boldsymbol{v}_r\rangle\right|
 + \left| \langle \boldsymbol{\psi}_0 , \boldsymbol{v}_r\rangle - \min_{j\in H_r}\psi_j(t)  \right| \\
&\leq& 2(n-k_G) | \boldsymbol{\psi}_0 | e^{-C \lambda_G (t-t_0)}.
\end{eqnarray*}
\end{proof}

Now, we proceed to the proof of our main result.

\begin{proof}(of Theorem \ref{limitbehavior})
Let $t_0 < t_1 < t_2 <...$ be all the instants such that $G_{\varepsilon}\left(\boldsymbol{\psi}(t_{i+1}) \right)$ is not equal to $G_{\varepsilon}\left(\boldsymbol{\psi}(t_i) \right)$. In virtue of Lemma \ref{diameterbound}, there are positive real numbers $M_{G}$ and $\lambda_{G}$ such that,
for any connected component $H_k$ of $G_{\varepsilon}\left(\boldsymbol{\psi}(t_k) \right)$ the inequality\footnote{The notation $i\in H_k$ was explained in the statement of Lemma \ref{diameterbound}.}
\begin{equation}
\max_{i\in H_k}\psi_i(t) - \min_{j\in H_k}\psi_j(t)  \leq M_G | \boldsymbol{\psi}_k | e^{-C \lambda_G (t-t_0)}
\end{equation}
holds for $t_k \leq t < t_{k+1}$, where $G = G_{\varepsilon}\left(\boldsymbol{\psi}(t_k) \right)$. Also, the inequality
\begin{equation}
M_G | \boldsymbol{\psi}_k | e^{-C \lambda_G t} \leq M | \boldsymbol{\psi}_k | e^{-C \lambda (t-t_0)}
\end{equation}
holds, where $M$ is the maximum of $M_G$ for all $G$ having $n$ vertices, and $\lambda$ is the minimum of $\lambda_G$ for all $G$ having $n$ vertices. In virtue of Lemma \ref{normnonincreasing},
\begin{equation}
M | \boldsymbol{\psi}_k | e^{-C \lambda (t-t_0)} \leq M | \boldsymbol{\psi}_0 | e^{-C \lambda (t-t_0)}.
\end{equation}

Finally, we get
\begin{equation}
\max_{i\in H_k}\psi_i(t) - \min_{j\in H_k}\psi_j(t)  \leq M | \boldsymbol{\psi}_0 | e^{-C \lambda (t-t_0)},
\end{equation}
for all $t_k \leq t < t_{k+1}$.

Using the fact that the right hand side of the equation above does not depend on $k$, we conclude that, for all $t \geq t_0$, and any connected component $H_t$ of $G_{\varepsilon}\left(\boldsymbol{\psi}(t) \right)$, the inequality
\begin{equation}
\max_{i\in H_t}\psi_i(t) - \min_{j\in H_t}\psi_j(t)  \leq M | \boldsymbol{\psi}_0 | e^{-C \lambda (t-t_0)}.
\end{equation}
holds.  So, for all $t$ large enough and for any connected component $H_t$ of $G_{\varepsilon}\left(\boldsymbol{\psi}(t) \right)$ we have
\begin{equation}
\max_{i\in H_t}\psi_i(t) - \min_{j\in H_t}\psi_j(t)  \leq \varepsilon.
\end{equation}

According to Definition \ref{homophilicnetwork} and the inequality above, any connected component of $G_{\varepsilon}\left(\boldsymbol{\psi}(t) \right)$  will be a complete graph and this structure will not change in the future. Therefore, there is a graph $\Omega$, which is the union of complete graphs, such that, for all $t$ large enough, $G_{\varepsilon}\left(\boldsymbol{\psi}(t) \right) = \Omega$.
\end{proof}

\section{Final remarks}
Needless to say that our model is not meant to make numerical predictions about society. What we developed was an idealization. The applications of our model could be in order to explore some sociological theories using the intuition from our model, i.e., to compare how claims about the human behavior in society agree or disagree with the properties of our model. At the moment of writing this paper, it is not clear to what extent our model is related to actual statistical data of real-life social networks. Also, it may be interesting to look for applications of our model in physics, chemistry, biology and computer science.

\section*{Acknowledgments and disclaimers}
The author is an external affiliate of the Wolfram Physics Project but is not engaged by any formal agreement in any activity constituting a competition with his Employer (University of Tartu). The production of the present paper, done during the free time of the author, was only the result of intellectual curiosity and it was not funded by any company or institution.

The author would like to thank Matthew Szudzik for the suggestions in order to improve the mathematica notebook ``A Wolfram-like model of language secessionism".

\bibliographystyle{alpha}
\bibliography{mybibfile}

\newcommand{\etalchar}[1]{$^{#1}$}
\begin{thebibliography}{WWSP01}

\bibitem[AL16]{adams2016five}
Fred~C Adams and Greg Laughlin.
\newblock {\em The five ages of the universe: inside the physics of eternity}.
\newblock Simon and Schuster, 2016.

\bibitem[BCJ{\etalchar{+}}12]{bramoulle2012homophily}
Yann Bramoull{\'e}, Sergio Currarini, Matthew~O Jackson, Paolo Pin, and Brian~W
  Rogers.
\newblock Homophily and long-run integration in social networks.
\newblock {\em Journal of Economic Theory}, 147(5):1754--1786, 2012.

\bibitem[DD82]{deguy1982rene}
Michel Deguy and Jean-Pierre Dupuy.
\newblock {\em Ren{\'e} {G}irard et le probl{\`e}me du mal}.
\newblock Grasset, 1982.

\bibitem[FBF09]{ferrari2009monkey}
PF~Ferrari, L~Bonini, and L~Fogassi.
\newblock From monkey mirror neurons to primate behaviours: possible
  ‘direct’and ‘indirect’ pathways.
\newblock {\em Philosophical Transactions of the Royal Society B: Biological
  Sciences}, 364(1528):2311--2323, 2009.

\bibitem[FNCF12]{fu2012evolution}
Feng Fu, Martin~A Nowak, Nicholas~A Christakis, and James~H Fowler.
\newblock The evolution of homophily.
\newblock {\em Scientific reports}, 2:845, 2012.

\bibitem[FR14]{ferrari2014mirror}
Pier~Francesco Ferrari and Giacomo Rizzolatti.
\newblock Mirror neuron research: the past and the future, 2014.

\bibitem[Gar11]{garrels2011mimesis}
Scott~R Garrels.
\newblock {\em Mimesis and {S}cience: {E}mpirical {R}esearch on {I}mitation and
  the {M}imetic {T}heory of {C}ulture and {R}eligion}.
\newblock MSU Press, 2011.

\bibitem[Gir98]{girard1998mimesis}
Ren{\'e} Girard.
\newblock Mimesis and violence.
\newblock {\em Herm{\`e}s, La Revue}, 1(22):47--52, 1998.

\bibitem[Gir14]{girard2014mensonge}
Ren{\'e} Girard.
\newblock {\em Mensonge romantique et v{\'e}rit{\'e} romanesque}.
\newblock Grasset, 2014.

\bibitem[Gor20a]{gorard2020Quantum}
Jonathan Gorard.
\newblock {S}ome {Q}uantum {M}echanical {P}roperties of the {W}olfram {M}odel,
  2020.

\bibitem[Gor20b]{gorard2020Relativistic}
Jonathan Gorard.
\newblock {S}ome {R}elativistic and {G}ravitational {P}roperties of the
  {W}olfram {M}odel.
\newblock {\em arXiv preprint arXiv:2004.14810}, 2020.

\bibitem[Hey10]{heyes2010mirror}
Cecilia Heyes.
\newblock Where do mirror neurons come from?
\newblock {\em Neuroscience \& Biobehavioral Reviews}, 34(4):575--583, 2010.

\bibitem[Hic09]{hickok2009eight}
Gregory Hickok.
\newblock Eight problems for the mirror neuron theory of action understanding
  in monkeys and humans.
\newblock {\em Journal of cognitive neuroscience}, 21(7):1229--1243, 2009.

\bibitem[Iac09]{iacoboni2009imitation}
Marco Iacoboni.
\newblock Imitation, empathy, and mirror neurons.
\newblock {\em Annual review of psychology}, 60:653--670, 2009.

\bibitem[KA17]{kim2017effect}
Kibae Kim and J{\"o}rn Altmann.
\newblock Effect of homophily on network formation.
\newblock {\em Communications in Nonlinear Science and Numerical Simulation},
  44:482--494, 2017.

\bibitem[Khr16]{khrennikov2016social}
Andrei Khrennikov.
\newblock ‘{S}ocial laser’: action amplification by stimulated emission of
  social energy.
\newblock {\em Philosophical Transactions of the Royal Society A: Mathematical,
  Physical and Engineering Sciences}, 374(2058):20150094, 2016.

\bibitem[Law18]{lawtoo2018violence}
Nidesh Lawtoo.
\newblock Violence and the mimetic unconscious (part one): The cathartic
  hypothesis: Aristotle, freud, girard.
\newblock {\em Contagion}, 25:159--192, 2018.

\bibitem[Mei18]{meir2018strategic}
Reshef Meir.
\newblock Strategic voting.
\newblock {\em Synthesis Lectures on Artificial Intelligence and Machine
  Learning}, 13(1):1--167, 2018.

\bibitem[MSLC01]{mcpherson2001birds}
Miller McPherson, Lynn Smith-Lovin, and James~M Cook.
\newblock Birds of a feather: Homophily in social networks.
\newblock {\em Annual review of sociology}, 27(1):415--444, 2001.

\bibitem[New10]{newman2010networks}
Mark E.~J. Newman.
\newblock {\em Networks: {A}n {I}ntroduction}.
\newblock Oxford Univertity Press, 2010.

\bibitem[Sus08]{susskind2008black}
Leonard Susskind.
\newblock {\em The black hole war: My battle with {S}tephen {H}awking to make
  the world safe for quantum mechanics}.
\newblock Hachette UK, 2008.

\bibitem[Wol20]{wolfram2020class}
Stephen Wolfram.
\newblock A {C}lass of {M}odels with the {P}otential to {R}epresent
  {F}undamental {P}hysics.
\newblock {\em arXiv}, pages arXiv--2004, 2020.

\bibitem[WWSP01]{williams2001imitation}
Justin~HG Williams, Andrew Whiten, Thomas Suddendorf, and David~I Perrett.
\newblock Imitation, mirror neurons and autism.
\newblock {\em Neuroscience \& Biobehavioral Reviews}, 25(4):287--295, 2001.

\bibitem[YOMV18]{yurevich2018modeling}
Petukhov~Alexander Yurevich, Malkhanov~Alexey Olegovich, Sandalov~Vladimir
  Mikhailovich, and Petukhov~Yuri Vasilievich.
\newblock Modeling conflict in a social system using diffusion equations.
\newblock {\em Simulation}, 94(12):1053--1061, 2018.

\end{thebibliography}

\end{document}